\pgfplotsset{compat=1.14}
\newcolumntype{K}[1]{>{\centering\arraybackslash$}p{#1}<{$}}
\newcolumntype{R}{>{\raggedleft\arraybackslash}X}
\newcolumntype{L}{>{\raggedright\arraybackslash}X}
\newcolumntype{C}{>{\centering\arraybackslash}X}
\newcolumntype{M}[1]{>{\centering\arraybackslash}m{#1}}
\newcolumntype{A}{>{\columncolor{gray!25}}C}
\newcolumntype{a}{>{\columncolor{gray!25}}c}
\newlength{\tablen}
\newcolumntype{.}{D{.}{.}{-1}}
\g@addto@macro{\UrlBreaks}{\UrlOrds}
\renewcommand\p@subfigure{\arabic{figure}.}
\renewcommand\p@subtable{\arabic{table}.}
\def\addlegendimage{\csname pgfplots@addlegendimage\endcsname}
\setlist[itemize]{leftmargin=2.5\parindent}
\setlist[enumerate]{leftmargin=2.5\parindent}
\theoremstyle{plain}
\newtheorem{theorem}{Theorem}[section]
\theoremstyle{definition}
\newtheorem{axiom}{Axiom}%[section]
\newtheorem{definition}{Definition}[section]
\newtheorem{example}{Example}[section]
\theoremstyle{remark}
\newtheorem{notation}{Notation}[section]
\newtheorem{remark}{Remark}[section]
\def\keywords{\vspace{.5em} % Add keywords
{\noindent \textit{Keywords}: }}
\def\JEL{\vspace{.5em} % Add keywords
{\noindent \textbf{\emph{JEL} classification number}: }}
\def\AMS{\vspace{.5em} % Add keywords
{\noindent \textbf{\emph{MSC} class}: }}
\author{\href{https://sites.google.com/site/laszlocsato87}{L\'aszl\'o Csat\'o}\thanks{~E-mail: \emph{csato.laszlo@sztaki.mta.hu}} }
\affil{Institute for Computer Science and Control, Hungarian Academy of Sciences (MTA SZTAKI) \\
Laboratory on Engineering and Management Intelligence, Research Group of Operations Research and Decision Systems}
\affil{Corvinus University of Budapest (BCE) \\
Department of Operations Research and Actuarial Sciences}
\affil{Budapest, Hungary}
\title{Journal ranking should depend \\ on the level of aggregation}
\date{\today}
\def\Dedication{
{\noindent
Ich behaupte aber, da{\ss } in jeder besonderen Naturlehre nur so viel eigentliche Wissenschaft angetroffen werden k\"onne, als darin Mathematik anzutreffen ist.}\footnote{~``\emph{I maintain that in each particular natural science there is only as much true science as there is mathematics.}'' (Source: Smith, J. T.: David Hilbert's 1930 Radio Address -- German and English. \url{https://www.maa.org/book/export/html/326610})}
%\vspace{0.25cm}
\flushright
\noindent (Immanuel Kant: \emph{Metaphysische Anfangsgr\"unde der Naturwissenschaft})

\vspace{0.5cm} 
\justify }
\begin{document}

\maketitle

\Dedication

\begin{abstract}
\noindent
Journal ranking is becoming more important in assessing the quality of academic research. Several indices have been suggested for this purpose, typically on the basis of a citation graph between the journals. We follow an axiomatic approach and find an impossibility theorem: any self-consistent ranking method, which satisfies a natural monotonicity property, should depend on the level of aggregation. Our result presents a trade-off between two axiomatic properties and reveals a dilemma of aggregation.
%This contribution reinforces that scientometrics cannot be entirely based on the use of bibliometric indices.

\keywords{Journal ranking; citations; axiomatic approach; impossibility}

\AMS{91A80, 91B14}
% Game Theory, Applications of game theory
% Social choice

\JEL{C44, D71}
% Operations Research, Statistical Decision Theory 
% Social Choice, Clubs, Committees, Associations
\end{abstract}

\section{Introduction}

The measurement of the quality and quantity of academic research plays an increasing role in the evaluation of researchers and research proposals. This paper will focus on a particular field of scientometrics, that is, journal ranking.
Furthermore, since a number of bibliometric indices have been suggested to assess intellectual influence, and now a plethora of ranking methods are available to measure the performance of journals and scholars \citep{Palacios-HuertaVolij2014}, we follow an axiomatic approach because the introduction of some reasonable axioms or conditions is able to narrow the set of appropriate methods, to reveal their crucial properties, and to allow for their comparison.

An important contribution of similar analyses can be an axiomatic characterisation, meaning that a set of properties uniquely determine a preference vector. For example, \citet{Palacios-HuertaVolij2004} give a characterisation of the invariant method, while \citet{Demange2014} provides a characterisation of the handicap method, both of them used to rank journals.
Results for citation indices are probably even more abundant, including characterisations of the $h$-index \citep{Kongo2014, Marchant2009, Miroiu2013, Quesada2010, Quesada2011a, Quesada2011b, Woeginger2008a}, the $g$-index \citep{Woeginger2008b, Quesada2011a, AdachiKongo2015}, the Euclidean index \citep{PerryReny2016}, or a class of step-based indices \citep{ChambersMiller2014}, among others. \citet{delaVegaVolij2018} characterise scholar rankings admitting a measure representation.
There are also axiomatic comparisons of bibliometric indices \citep{BouyssouMarchant2014, BouyssouMarchant2016}.

However, the above works seldom uncover the inevitable trade-offs between different natural requirements, an aim which can be achieved mainly by impossibility theorems. Similar results are well-established in social choice theory since Arrow's impossibility theorem \citep{Arrow1951} and the Gibbard-Satterthwaite theorem \citep{Gibbard1973, Satterthwaite1975, DugganSchwartz2000} but not so widely used in scientometrics.

We provide an impossibility result in journal ranking. In particular, it will be proved that two axioms, invariance to aggregation and self-consistency, cannot be satisfied simultaneously even on a substantially restricted domain of citation graphs. Invariance to aggregation means that the ranking of two journals is not influenced by the level of aggregation among the remaining journals, while self-consistency, introduced by \citet{ChebotarevShamis1997a}, is a kind of monotonicity property, responsible for some impossibility theorems in ranking from paired comparisons \citep{Csato2019d, Csato2019e}.

The paper is organised as follows.
Our setting and notations are introduced in Section~\ref{Sec2}. Section~\ref{Sec3} motivates and defines the two axioms, which turn out to be incompatible in Section~\ref{Sec4}. Section~\ref{Sec5} summarises the main findings and concludes.

\section{The journal ranking problem} \label{Sec2}

A journal ranking problem consists of a group of journals and their respective citation records \citep{Palacios-HuertaVolij2014}.
Let $N = \{ J_1,J_2, \dots, J_n \}$, $n \in \mathbb{N}$ be a non-empty finite \emph{set of journals} and $C = \left[ c_{ij} \right] \in \mathbb{R}^{n \times n}$ be a $|N| \times |N|$ nonnegative \emph{citation matrix} for $N$.
The entry $c_{ij}$ can be directly the number of citations that journal $J_i$ received from journal $J_j$, or any reasonable transformation of this value, for example, by using exponentially decreasing weights for older citations.
%$c_{ii} = 0$ is assumed for all $J_i \in N$.
%Possible derivations of the tournament matrix can be found in \citet{Gonzalez-DiazHendrickxLohmann2014} and \citet{Csato2015a}.

The pair $(N,C)$ is called a \emph{journal ranking problem}.
The set of journal ranking problems with $n$ journals ($|N| = n$) is denoted by $\mathcal{J}^n$.

The aim is to aggregate the opinions given in the citation matrix into a single judgement. Formally, a \emph{scoring procedure} $f$ is a $\mathcal{J}^n \to \mathbb{R}^n$ function that takes a journal ranking problem $(N,C)$ and returns a rating $f_i(N,C)$ for each journal $J_i \in N$, representing this judgement.

A scoring method immediately induces a ranking $\succeq$ for the journals of $N$ (a transitive and complete weak order on the set of $N$): $f_i(N,C) \geq f_j(N,C)$ means that journal $J_i$ is ranked weakly above $J_j$, denoted by $J_i \succeq J_j$. The symmetric and asymmetric parts of $\succeq$ are denoted by $\sim$ and $\succ$, respectively: $J_i \sim J_j$ if both $J_i \succeq J_j$ and $J_i \preceq J_j$ hold, while $J_i \succ J_j$ if $J_i \succeq J_j$ holds but $J_i \preceq J_j$ does not hold.

A journal ranking problem $(N,C)$ has the symmetric \emph{matches matrix} $M = C + C^\top = \left[ m_{ij} \right] \in \mathbb{R}^{n \times n}$ such that $m_{ij}$ is the number of the citations between the journals $J_i$ and $J_j$ in both directions, which can be called the number of matches between them in the terminology of sports \citep{KoczyStrobel2010, Csato2015a}.

It is sometimes convenient to consider not a general problem, arising from complicated networks of citations, but only a simpler one. 

A journal ranking problem $(N,C) \in \mathcal{J}^n$ is called \emph{balanced} if $\sum_{X_k \in N} m_{ik} = \sum_{X_k \in N} m_{jk}$ for all $J_i,J_j \in N$.
The set of balanced journal ranking problems is denoted by $\mathcal{J}_{B}$.
In a balanced journal ranking problem, all journals have the same number of matches.

%A journal ranking problem $(N,C) \in \mathcal{J}^n$ is called \emph{round-robin} if $m_{ij} = m_{k \ell}$ for all $J_i \neq J_j$ and $J_k \neq J_\ell$. The set of round-robin journal ranking problems is denoted by $\mathcal{J}_{R}$. In a round-robin journal ranking problem, the number of matches between any pair of journals is the same. A round-robin ranking problem is also balanced: $\mathcal{J}_{R} \subset \mathcal{J}_{B}$.

A journal ranking problem $(N,C) \in \mathcal{J}^n$ is called \emph{unweighted} if $m_{ij} \in \{ 0; 1 \}$ for all $J_i,J_j \in N$.
The set of unweighted journal ranking problems is denoted by $\mathcal{J}_{U}$.
In an unweighted journal ranking problem, either there is no citations, or there exists only one citation between any pair of journals.

A journal ranking problem $(N,C) \in \mathcal{J}^n$ is called \emph{loopless} if $c_{ii} = 0$ for all $J_i \in N$.
The set of unweighted journal ranking problems is denoted by $\mathcal{J}_{L}$.
In a loopless problem, self-citations are disregarded.

The subsets of balanced, unweighted, and loopless journal ranking problems restrict the matches matrix $M$.

A journal ranking problem $(N,C) \in \mathcal{J}^n$ is called \emph{extremal} if $|c_{ij}| \in \{ 0; m_{ij} / 2; m_{ij} \}$ for all $J_i,J_j \in N$.
The set of extremal journal ranking problems is denoted by $\mathcal{J}_{E}$.
In an extremal journal ranking problem, only three cases are allowed in the comparison of journals $J_i$ and $J_j$: there are citations only for $J_i$ or $J_j$, or they are tied with respect to mutual citations. 

Any intersection of these special classes can be considered, too.

While a given citation matrix $C$ will seldom lead to a balanced, unweighted, loopless, or extremal journal ranking problem in practice, they can still be relevant for applications due to the possible transformation of citations. For example, it may make sense to remove self-citations from matrix $C$, and consider only three types of paired comparisons in the derived matrix $\hat{C}$:
\begin{itemize}
\item
$\hat{c}_{ij} = 0$ if $c_{ij} = 0$ and $c_{ji} = 0$;
\item
$\hat{c}_{ij} = 0$ if $c_{ji} > 0$ and $c_{ij} < c_{ji} / 2$;
\item
$\hat{c}_{ij} = 0.5$ if $c_{ji} > 0$ and $c_{ji} / 2 \leq c_{ij} \leq 2c_{ji}$;
\item
$\hat{c}_{ij} = 1$ if $2c_{ji} < c_{ij}$.
\end{itemize}
In other words, two journals are not compared ($\hat{c}_{ij} = \hat{c}_{ji} = 0$) if they do not cite each other, their paired comparison is tied ($\hat{c}_{ij} = \hat{c}_{ji} = 0.5$) if their mutual citations are approximately balanced -- that is, $J_i$ does not refer to $J_j$ more than two times than $J_j$ refers to $J_i$, and vice versa --, and $J_i$ is maximally better than $J_j$ ($\hat{c}_{ij} = 1$ and $\hat{c}_{ji} = 0$) if $J_j$ cites $J_i$ more than two times than $J_i$ cites $J_j$. Then the resulting journal ranking problem $\left( N,\hat{C} \right) \in \mathcal{J}^n$ is unweighted, loopless, and extremal.

\section{Axioms of journal ranking} \label{Sec3}

In this section two properties, a natural axiom of aggregation and a variant of monotonicity, are introduced. 

\subsection{Invariance to aggregation} \label{Sec31}

The first condition aims to regulate the ranking if two journals are aggregated into one.

\begin{axiom} \label{Axiom31}
\emph{Invariance to aggregation} ($IA$):
Let $(N,C) \in \mathcal{J}^n$ be a journal ranking problem and $J_i, J_j \in N$ be two different journals.
Journal ranking problem $\left( N^{i \cup j},C^{i \cup j} \right) \in \mathcal{J}^{n-1}$ is given by $N^{i \cup j} = \left( N \setminus \{ J_i, J_j \} \right) \cup J_{i \cup j}$ and $C^{i \cup j} = \left[ c^{i \cup j}_{k \ell} \right] \in \mathbb{R}^{(n-1) \times (n-1)}$ such that
\begin{itemize}
\item
$c^{i \cup j}_{k \ell} = c_{k \ell}$ if $\{ J_k, J_\ell \} \cap \{ J_i, J_j \} = \emptyset$;
\item
$c^{i \cup j}_{k (i \cup j)} = c_{ki} + c_{kj}$ for all $J_k \in N \setminus \{ J_i, J_j \}$;
\item
$c^{i \cup j}_{(i \cup j) \ell} = c_{i \ell} + c_{j \ell}$ for all $J_\ell \in N \setminus \{ J_i, J_j \}$.
\end{itemize}
Scoring procedure $f: \mathcal{J}^n \to \mathbb{R}^n$ is called \emph{invariant to aggregation} if $f_k(N,C) \geq f_\ell(N,C)$ implies $f_k \left( N^{i \cup j},C^{i \cup j} \right) \geq f_\ell \left( N^{i \cup j},C^{i \cup j} \right)$ for all $J_k, J_\ell \in N \setminus \{ J_i, J_j \}$.
\end{axiom}

The idea behind invariance to aggregation is that any journal ranking problem can be transformed into a reduced problem by defining the union $J_{i \cup j}$ of journals $J_i$ and $J_j$ as follows: all citations between them are deleted, while any citations by/to these journals are summed up for the ``aggregated'' journal $J_{i \cup j}$. This transformation is required to preserve the order of the journals not affected by the aggregation.

Such an aggregation makes sense, for example, if one is interested only in the ranking of journals from a given field (e.g. economics journals) when journals from other disciplines can be considered as one entity.  

Invariance to aggregation is somewhat related to the \emph{consistency} axiom of \citet{Palacios-HuertaVolij2004}, which is also based on the notion of the reduced problem. However, our property probably takes the information from the missing journal in a more straightforward way into consideration.
%than consistency because the latter has been devised to play a crucial role in the characterisation of the invariant method, so it uses the information from the missing journal in a more complicated and unintuitive way.

Invariance to aggregation has some connections to the famous \emph{independence of irrelevant alternatives} ($IIA$) condition, too, which is used, for example, in Arrow's impossibility theorem \citep{Arrow1951}. Both axioms require an important aspect of the problem, the citations between two journals and the individual preferences between two alternatives, respectively, to remain fixed. However, there is a crucial difference: the set of alternatives (corresponding to journals) is allowed to change in the case of $IA$, while the preferences (corresponding to citations) are allowed to change in the case of $IIA$.
 
\subsection{Self-consistency} \label{Sec32}

This axiom, originally introduced in \citet{ChebotarevShamis1997a} to operators used for aggregating preferences, may require a longer explanation.

First, some reasonable conditions are formulated for the ranking derived from any journal ranking problem. In particular, journal $J_i$ is judged better than journal $J_j$ if one of the following holds:
\begin{enumerate}[label = \pentagon\arabic*), ref = \pentagon\arabic*]
\item \label{SC_con1}
$J_i$ has more favourable citation records against the same journals;
\item \label{SC_con2}
$J_i$ has more favourable citation records against journals with the same quality;
\item \label{SC_con3}
$J_i$ has the same citation records against higher quality journals;
\item \label{SC_con4}
$J_i$ has more favourable citation records against higher quality journals.
\end{enumerate}
In addition, journals $J_i$ and $J_j$ should get the same rank if one of the following holds:
\begin{enumerate}[resume, label = \pentagon\arabic*), ref = \pentagon\arabic*]
\item \label{SC_con5}
they have the same citation records against the same journals;
\item \label{SC_con6}
they have the same citation records against journals with the same quality.
\end{enumerate}

Principles \ref{SC_con2}-\ref{SC_con4} and \ref{SC_con6} can be applied only after measuring the quality of the journals. The name of the property, \emph{self-consistency}, refers to the fact that this is provided by the scoring procedure itself.

The meaning of the requirements above is illustrated by an example.

\begin{figure}[htbp]
\centering
\caption{The journal ranking problem of Example~\ref{Examp31}}
\label{Fig31}
\begin{tikzpicture}[scale=1, auto=center, transform shape, >=triangle 45]
\tikzstyle{every node}=[draw,shape=circle];
  \node (n1) at (135:2) {$J_1$};
  \node (n2) at (45:2)  {$J_2$};
  \node (n3) at (225:2) {$J_3$};
  \node (n4) at (315:2) {$J_4$};

  \foreach \from/\to in {n1/n2,n1/n3,n2/n4,n3/n4}
    \draw [->] (\from) -- (\to);
\end{tikzpicture}
\end{figure}
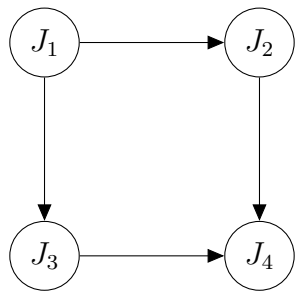

\begin{example} \label{Examp31}
Consider the journal ranking problem $(N,C) \in \mathcal{J}_B^4 \cap \mathcal{J}_U^4 \cap \mathcal{J}_L^4 \cap \mathcal{J}_E^4$ with the following citation matrix:
\[
C = \left[
\begin{array}{cccc}
    0     & 1     & 1     & 0 \\
    0     & 0     & 0     & 1 \\
    0     & 0     & 0     & 1 \\
    0     & 0     & 0     & 0 \\
\end{array}
\right].
\]
This is shown in Figure~\ref{Fig31} where a directed edge from node $J_i$ to $J_j$ indicates that journal $J_i$ has received a citation from journal $J_j$.
%This correspondence between citation matrices and graphs will be used in further examples, too.
\end{example}

Self-consistency has the following implications for the journal ranking problem presented in Example~\ref{Examp31}:
\begin{itemize}
\item
$J_2 \sim J_3$ due to rule~\ref{SC_con5}.
\item
$J_1 \succ J_4$ because of rule~\ref{SC_con1} as $c_{12} > c_{42}$ and $c_{13} > c_{43}$.
\item
Assume for contradiction that $J_1 \preceq J_2$. Then $c_{12} > c_{21}$ and $J_2 \succeq J_1$, as well as $c_{13} = c_{24}$ and $J_3 \sim J_2 \succeq J_1 \succ J_4$, so rule~\ref{SC_con4} leads to $J_1 \succ J_2$, which is impossible.
Consequently, $J_1 \succ (J_2 \sim J_3)$.
\item
Assume for contradiction that $J_2 \preceq J_4$. Then $c_{21} > c_{43}$ and $J_1 \succ J_3$, as well as $c_{24} > c_{43}$ and $J_4 \succeq J_2 \sim J_3$, so rule~\ref{SC_con4} leads to $J_2 \succ J_4$, which is impossible.
Consequently, $(J_2 \sim J_3) \succ J_4$.
\end{itemize}
To conclude, self-consistency demands the ranking to be $J_1 \succ (J_2 \sim J_3) \succ J_4$ in Example~\ref{Examp31}.

It is clear that self-consistency does not guarantee the uniqueness of the ranking in general \citep{Csato2019d}.

Now we turn to the mathematical formulation of this axiom.

\begin{definition} \label{Def31}
\emph{Competitor set}:
Let $(N,C) \in \mathcal{J}_U^n$ be an unweighted journal ranking problem. The \emph{competitor set} of journal $J_i$ is $S_i = \{ J_j: m_{ij} = 1 \}$.
\end{definition}

Journals in the competitor set $S_i$ are called the \emph{competitors} of $J_i$.
Note that $|S_i| = |S_j|$ for all $J_i, J_j \in N$ if and only if the ranking problem is balanced.

%\begin{notation} \label{Not31}
%Consider an unweighted journal ranking problem $(N,C) \in \mathcal{J}_U^n$ such that $J_i, J_j \in N$ are two different journals and $g: S_i \leftrightarrow S_j$ is a one-to-one correspondence between the competitors of $J_i$ and $J_j$, hence $|S_i| = |S_j|$.
%Then $\mathfrak{g} : \{k: J_k \in S_i \} \leftrightarrow \{\ell: J_\ell \in S_j \}$ is given by $J_{\mathfrak{g}(k)} = g(J_k)$.
%\end{notation}

The competitor set is defined only for unweighted journal ranking problem but self-consistency may have implications for any pair of journals which have the same number of matches. The generalisation is based on a decomposition of journal ranking problems.

\begin{definition} \label{Def32}
\emph{Sum of journal ranking problems}:
Let $(N,C),(N,C') \in \mathcal{J}^n$ be two journal ranking problems with the same set of journals $N$. The \emph{sum} of these journal ranking problems is the journal ranking problem $(N,C+C') \in \mathcal{J}^n$.
\end{definition}

The sum of journal ranking problems has a number of reasonable interpretations. For instance, they can reflect the citations from different years, or by authors from different countries.

According to Definition~\ref{Def32}, any journal ranking problem can be derived as the sum of unweighted journal ranking problems. However, it might have a number of possible decompositions.

\begin{notation} \label{Not32}
Let $(N,C^{(p)}) \in \mathcal{J}_U^n$ be an unweighted journal ranking problem.
The competitor set of journal $J_i$ is $S_i^{(p)}$.
Let $J_i, J_j \in N$ be two different journals and $g^{(p)}: S_i^{(p)} \leftrightarrow S_j^{(p)}$ be a one-to-one correspondence between the competitors of $J_i$ and $J_j$.
Then $\mathfrak{g}^{(p)}: \{k: J_k \in S_i^{(p)} \} \leftrightarrow \{\ell: J_\ell \in S_j^{(p)} \}$ is given by $J_{\mathfrak{g}^{(p)}(k)} = g^{(p)}(J_k)$.
\end{notation}

Finally, we are able to introduce conditions~\ref{SC_con1}-\ref{SC_con6} with mathematical formulas.

\begin{axiom} \label{Axiom32}
\emph{Self-consistency} ($SC$) \citep{ChebotarevShamis1997a}:
Scoring procedure $f: \mathcal{J}^n \to \mathbb{R}^n$ is called \emph{self-consistent} if the following implication holds for any journal ranking problem $(N,C) \in \mathcal{J}^n$ and for any journals $J_i,J_j \in N$:
if there exists a decomposition of the journal ranking problem $(N,C)$ into $m$ unweighted journal ranking problems -- that is, $C = \sum_{p=1}^m C^{(p)}$ and $(N,C^{(p)}) \in \mathcal{J}_U^n$ is an unweighted journal ranking problem for all $p = 1,2, \dots ,m$ -- together with the existence of a one-to-one mapping $g^{(p)}$ from $S^{(p)}_i$ onto $S^{(p)}_j$ such that $c_{ik}^{(p)} \geq c_{j \mathfrak{g}^{(p)}(k)}^{(p)}$ and $f_k(N,C) \geq f_{\mathfrak{g}^{(p)}(k)}(N,C)$ for all $p = 1,2, \dots ,m$ and $J_k \in S_i^{(p)}$, then
$f_i(N,C) \geq f_{j}(N,C)$.
Furthermore, $f_i(N,C) > f_{j}(N,C)$ if $c_{ik}^{(p)} > c_{j \mathfrak{g}^{(p)}(k)}^{(p)}$ or $f_k(N,C) > f_{\mathfrak{g}^{(p)}(k)}(N,C)$ for at least one $1 \leq p \leq m$ and $J_k \in S_i^{(p)}$.
\end{axiom}

In a nutshell, self-consistency implies that if journal $J_i$ does not show worse performance than journal $J_j$ on the basis of the citation matrix, then it is not ranked lower, in addition, it is ranked strictly higher when it becomes clearly better.

\citet{ChebotarevShamis1997a} consider only loopless journal ranking problems but the extension of self-consistency is trivial as presented above.

%The implications of self-consistency are nontrivial due to the several possible decompositions.

\citet[Theorem~5]{ChebotarevShamis1998a} gives a necessary and sufficient condition for self-consistent scoring procedures, while \citet[Table~2]{ChebotarevShamis1998a} presents some scoring procedures that satisfy this requirement.
See also \citet{Csato2019d} for an extensive discussion of self-consistency.

\section{The incompatibility of the two axioms} \label{Sec4}

In the following, it will be proved that no scoring procedure can meet axioms $IA$ and $SC$.

\begin{figure}[htbp]
\centering
\caption{The journal ranking problems of Example~\ref{Examp41}}
\label{Fig41}
  
\begin{subfigure}{.5\textwidth}
  \centering
  \subcaption{Journal ranking problem $(N,C)$}
  \label{Fig41a}
\begin{tikzpicture}[scale=1, auto=center, transform shape, >=triangle 45]
\tikzstyle{every node}=[draw,shape=circle]; 
  \node (n1) at (135:2) {$J_1$};
  \node (n2) at (45:2)  {$J_2$};
  \node (n3) at (225:2) {$J_3$};
  \node (n4) at (315:2) {$J_4$};

  \draw (n1) -- (n2);
  \draw (n1) -- (n3);
  \draw (n2) -- (n4);
  \draw [->] (n3) -- (n4);
\end{tikzpicture}
\end{subfigure}%\vspace{0.5cm}
\begin{subfigure}{.5\textwidth}
  \centering
  \subcaption{Journal ranking problem $(N^{3 \cup 4},C^{3 \cup 4})$}
  \label{Fig41b}
\begin{tikzpicture}[scale=1, auto=center, transform shape, >=triangle 45]
\tikzstyle{every node}=[draw,shape=circle];
  \node (n1) at (135:2) {$J_1$};
  \node (n2) at (45:2)  {$J_2$};
  \node (n3) at (270:1.35) {$J_{3 \cup 4}$};

  \draw (n1) -- (n2);
  \draw (n1) -- (n3);
  \draw (n2) -- (n3);
\end{tikzpicture}
\end{subfigure}
\end{figure}
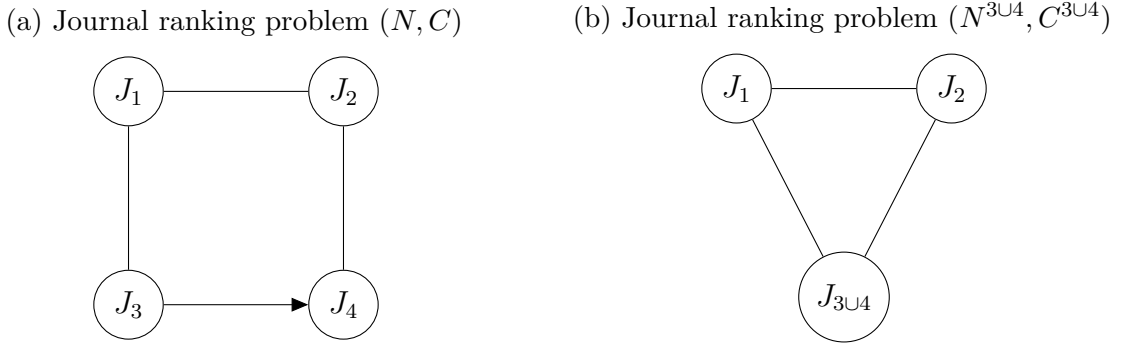

\begin{example} \label{Examp41}
Let $(N,C) \in \mathcal{J}_B^4 \cap \mathcal{J}_U^4 \cap \mathcal{J}_L^4 \cap \mathcal{J}_E^4$ and $(N^{3 \cup 4},C^{3 \cup 4}) \in \mathcal{J}_B^3 \cup \mathcal{J}_U^3 \cap \mathcal{J}_L^3 \cap \mathcal{J}_E^3$ be the journal ranking problems with the citation matrices
\[
C = \left[
\begin{array}{K{1.5em}K{1.5em}K{1.5em}K{1.5em}}
    0     & 0.5   & 0.5   & 0   \\
    0.5   & 0     & 0     & 0.5 \\
    0.5   & 0     & 0     & 1   \\
    0     & 0.5   & 0     & 0   \\
\end{array}
\right] \qquad \text{and} \qquad
C^{3 \cup 4} = \left[
\begin{array}{K{1.5em}K{1.5em}K{1.5em}}
    0     & 0.5   & 0.5 \\
    0.5   & 0     & 0.5 \\
    0.5   & 0.5   & 0 \\
\end{array}
\right] \text{, respectively}.
\]
Journal ranking problem $(N^{3 \cup 4},C^{3 \cup 4})$ is obtained by uniting journals $3$ and $4$.

This is shown in Figure~\ref{Fig41} where a directed edge from node $J_i$ to $J_j$ indicates that journal $J_i$ has received a citation from journal $J_j$, and an undirected edge between the nodes means that the two journals are tied by mutual citations.
\end{example}

\begin{theorem} \label{Theo41}
There exists no scoring procedure that is invariant to aggregation and self-consistent.
\end{theorem}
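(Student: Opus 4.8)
The plan is to prove the incompatibility directly on the single pair of problems in Example~\ref{Examp41}. I would assume, for contradiction, that a scoring procedure $f$ is simultaneously self-consistent and invariant to aggregation, and then compare the ranking $f$ is forced to produce on $(N,C)$ with the one it is forced to produce on the aggregated problem $(N^{3 \cup 4}, C^{3 \cup 4})$.

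First I would read off from $(N,C)$ the two strict comparisons that self-consistency delivers without any prior knowledge of the ranking, exploiting that the relevant journals share a competitor set. Journals $J_2$ and $J_3$ both compete exactly against $\{J_1, J_4\}$; since $c_{31} = c_{21}$ (both tie $J_1$) while $c_{34} > c_{24}$ ($J_3$ beats $J_4$, whereas $J_2$ only ties it), the one-term decomposition together with the identity bijection makes self-consistency yield $J_3 \succ J_2$. Symmetrically, $J_1$ and $J_4$ both compete against $\{J_2, J_3\}$, with $c_{12} = c_{42}$ and $c_{13} > c_{43}$, so self-consistency gives $J_1 \succ J_4$. These two relations use only the reflexivity of the rating conditions, so they hold in every admissible ranking.

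The central step is to upgrade these to $J_1 \succ J_2$, and I would do this by contradiction. Here $J_1$ competes against $\{J_2, J_3\}$ and $J_2$ against $\{J_1, J_4\}$, and all four relevant citations are ties; the intuition is that $J_1$'s draw against the strong journal $J_3$ should outweigh $J_2$'s draw against the weak journal $J_4$. Suppose $J_2 \succeq J_1$. Combined with $J_3 \succ J_2$ and $J_1 \succ J_4$ this gives $f_3 \geq f_1$ and $f_2 > f_4$. Then the bijection $g$ from $S_1 = \{J_2, J_3\}$ onto $S_2 = \{J_1, J_4\}$ defined by $J_3 \mapsto J_1$ and $J_2 \mapsto J_4$ satisfies all citation inequalities (which are in fact equalities) together with the rating inequalities $f_3 \geq f_1$ and $f_2 \geq f_4$, and the strictness clause applies through $f_2 > f_4$. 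Self-consistency therefore forces $J_1 \succ J_2$, contradicting $J_2 \succeq J_1$; hence $J_1 \succ J_2$ holds in $(N,C)$.

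Finally I would compute the ranking on $(N^{3 \cup 4}, C^{3 \cup 4})$, which is completely symmetric: every pair among $J_1$, $J_2$, $J_{3 \cup 4}$ is tied, so self-consistency forces $J_1 \sim J_2$. Invariance to aggregation, applied to the pair $\{J_1, J_2\}$ that is untouched by uniting $J_3$ and $J_4$, requires their order to be preserved when passing from $(N,C)$ to $(N^{3 \cup 4}, C^{3 \cup 4})$; but $J_1 \succ J_2$ and $J_1 \sim J_2$ cannot both hold, which is the contradiction. I expect the middle step to be the main obstacle, precisely because self-consistency is self-referential --- the conclusion of its implication speaks about the very ratings we are trying to order --- so $J_1 \succ J_2$ cannot be obtained in isolation but must be bootstrapped from the two ``free'' comparisons via the contradiction hypothesis, with care taken to verify that the strict part of the axiom is triggered.
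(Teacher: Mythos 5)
Your proposal is correct and takes essentially the same route as the paper's own proof: the same Example~\ref{Examp41}, the same two ``free'' strict comparisons $f_3(N,C) > f_2(N,C)$ and $f_1(N,C) > f_4(N,C)$ via identity bijections, the same bootstrapping of $f_1(N,C) > f_2(N,C)$ by contradiction, and the same final clash between this strict inequality and the equality $f_1 = f_2$ that self-consistency forces in the symmetric aggregated problem. The only (immaterial) difference is the bijection in the middle step: you use $J_3 \mapsto J_1$, $J_2 \mapsto J_4$ with strictness triggered by $f_2 > f_4$, whereas the paper uses $J_2 \mapsto J_1$, $J_3 \mapsto J_4$ with strictness triggered by $f_3 > f_4$; both rest on the identical chain $f_3 > f_2 \geq f_1 > f_4$.
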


\begin{proof}
The contradiction of the two properties can be proved by Example~\ref{Examp41}.
Take first the journal ranking problem $(N,C)$, which has the competitor sets $S_1 = S_4 = \{ J_2, J_3 \}$ and $S_2 = S_3 = \{ J_1, J_4 \}$.
Assume for contradiction the existence of a scoring procedure $f: \mathcal{J}^n \to \mathbb{R}^n$ satisfying invariance to aggregation and self-consistency.

Self-consistency has several implications for the scoring procedure $f$ as follows:
\begin{enumerate}[label=\emph{\alph*})]
\item \label{Enum_a}
Consider the (identity) one-to-one correspondence $g_{14}: S_1 \leftrightarrow S_4$ such that $g_{14}(J_2) = J_2$ and $g_{14}(J_3) = J_3$. Then $g_{14}$ satisfies condition~\ref{SC_con1} of $SC$ due to $c_{12} = c_{42} = 0.5$ and $0.5 = c_{13} > c_{43} = 0$, thus $f_1(N,C) > f_4(N,C)$.

\item \label{Enum_b}
Consider the (identity) one-to-one correspondence $g_{32}: S_3 \leftrightarrow S_2$ such that $g_{32}(J_1) = J_1$ and $g_{32}(J_4) = J_4$. Then $g_{32}$ satisfies condition~\ref{SC_con1} of $SC$ due to $c_{31} = c_{31} = 0.5$ and $1 = c_{34} > c_{24} = 0.5$, thus $f_3(N,C) > f_2(N,C)$.

\item \label{Enum_c}
Suppose that $f_2(N,C) \geq f_1(N,C)$, which implies $f_3(N,C) > f_4(N,C)$ according to the inequalities derived in \ref{Enum_a} and \ref{Enum_b}.
Consider the one-to-one correspondence $g_{12}: S_1 \leftrightarrow S_2$ such that $g_{12}(J_2) = J_1$ and $g_{12}(J_3) = J_4$. Then $g_{12}$ satisfies condition~\ref{SC_con3} of $SC$ due to $c_{12} = c_{21} = 0.5$ and $c_{13} = c_{24} = 0.5$, thus $f_1(N,C) > f_2(N,C)$, a contradiction.
\end{enumerate}

Therefore $f_1(N,C) > f_2(N,C)$ should hold, when invariance to aggregation results in $f_1(N,C') > f_2(N,C')$.
However, self-consistency leads to $f_1(N,C') = f_2(N,C')$ in the journal ranking problem $(N,C')$ because of the one-to-one mapping $g_{12}: S_1' \leftrightarrow S_2'$ such that $g_{12}(J_2) = J_1$ and $g_{12}(J_1) = J_2$: the assumption of $f_1(N,C') > f_2(N,C')$ implies $f_1(N,C') < f_2(N,C')$ due to condition~\ref{SC_con3} (the competitors of $J_2$ are more prestigious), while $f_1(N,C') < f_2(N,C')$ would result in $f_1(N,C') < f_2(N,C')$ due to condition~\ref{SC_con3} (the competitors of $J_1$ are more prestigious) again.

Hence a scoring procedure cannot meet $IA$ and $SC$ at the same time.
\end{proof}

Since Example~\ref{Examp41} contains balanced, unweighted, loopless, and extremal journal ranking problems, there is few hope to avoid the impossibility of Theorem~\ref{Theo41} by plausible domain restrictions.

\begin{remark} \label{Rem41}
$IA$ and $SC$ are logically independent axioms as there exist scoring procedures that satisfy one of the two properties: the least squares method is self-consistent \citep[Theorem~5]{ChebotarevShamis1998a}, and the flat scoring procedure, which gives $f_i(N,C) = 0$ for all $i \in N$ and $(N,C) \in \mathcal{J}^n$, is invariant to aggregation.
\end{remark}

\section{Conclusions} \label{Sec5}

We have presented an impossibility theorem in journal ranking: a reasonable method cannot be invariant to the aggregation of journals, even in the case of a substantially restricted domain of citation graphs. An intuitive explanation is that invariance to aggregation is a local property (it modifies only the citations directly affecting the journals to be united), while self-consistency considers the global structure of the citations as it depends on the quality of the journals. The clash between local and global axioms can also be observed in other fields, such as game theory. In addition, the impossibility result clearly shows that invariance to aggregation is a rather strong requirement, similarly to its peer independence of irrelevant alternatives \citep{MalawskiZhou1994}.
Nevertheless, according to our finding, the choice of the set of journals to be compared is an important aspect of every empirical study which aims to measure intellectual influence.

It is clear that the axiomatic analysis discussed here has a number of limitations as it is able to consider indices from only one point of view \citep{GlanzelMoed2013}. For example, the citation graph is assumed to be known, that is, the issue of choosing an adequate time window is neglected. In addition, this paper has not addressed several important problems of scientometrics such as the comparability of distant research areas, or the proper treatment of different types of publications.
%(articles, reviews, book chapters, letters, research notes).

To summarise, the derivation of similar impossibility results may contribute to a better understanding of the inevitable trade-offs between various properties, and it means a natural subject of further studies besides axiomatic characterisations.

\section*{Acknowledgements}
\addcontentsline{toc}{section}{Acknowledgements}
\noindent
We are grateful to \emph{Gy\"orgy Moln\'ar} and \href{https://sites.google.com/view/doragretapetroczy}{\emph{D\'ora Gr\'eta Petr\'oczy}} for inspiration. \\
Two anonymous reviewers provided valuable comments and suggestions on an earlier draft. \\
The research was supported by OTKA grant K 111797 and by the MTA Premium Postdoctoral Research Program.

\bibliographystyle{apalike}
\bibliography{All_references}

\end{document}